\documentclass[a4paper,USenglish,10pt]{article}
\usepackage[utf8]{inputenc}

\usepackage{amsthm,amssymb,amsfonts,amsmath}
\newtheorem{theorem}{Theorem}
\newtheorem{lemma}[theorem]{Lemma}

\newtheorem{observation}[theorem]{Observation}
\theoremstyle{definition}
\newtheorem{definition}[theorem]{Definition}

\usepackage[hidelinks]{hyperref}
\usepackage[blocks]{authblk}

\usepackage{graphicx}
\graphicspath{{figures/}}
\usepackage{wrapfig}

\newcommand{\fig}[1]{\figurename~\ref{#1}}

\makeatletter
\g@addto@macro\bfseries{\boldmath}
\makeatother

\advance\hoffset-7mm
\setlength\textwidth{140mm}

\widowpenalty10000
\clubpenalty10000

\begin{document}

\title{A Note on the Flip Distance Problem for Edge-Labeled Triangulations}

\author{Alexander Pilz\thanks{Supported by a Schr\"odinger fellowship of the Austrian Science Fund (FWF): J-3847-N35.}}
\affil{Institute of Software Technology, Graz University of Technology.
  \texttt{apilz@ist.tugraz.at}}

\maketitle

\begin{abstract}
For both triangulations of point sets and simple polygons, it is known that determining the flip distance between two triangulations is an NP-hard problem.
To gain more insight into flips of triangulations and to characterize ``where edges go'' when flipping from one triangulation to another, flips in edge-labeled triangulations have lately attracted considerable interest.
In a recent breakthrough, Lubiw, Mas\'arov\'a, and Wagner (in Proc.\ 33rd Symp.\ of Computational Geometry, 2017) prove the so-called ``Orbit Conjecture'' for edge-labeled triangulations and ask for the complexity of the flip distance problem in the edge-labeled setting.
By revisiting and modifying the hardness proofs for the unlabeled setting, we show in this note that the flip distance problem is APX-hard for edge-labeled triangulations of point sets and NP-hard for triangulations of simple polygons.
The main technical challenge is to show that this remains true even if the source and target triangulation are the same when disregarding the labeling.
\end{abstract}

\section{Introduction}
A \emph{triangulation of a point set} in the plane is a maximal straight-line crossing-free graph whose vertex set is exactly the point set.
Similarly, a \emph{triangulation of a simple polygon} is a maximal straight-line crossing-free graph whose edges are boundary edges and diagonals of the polygon and whose vertex set is exactly that of the polygon.
For two triangles of the triangulation forming a convex quadrilateral, the operation of removing their common edge and replacing it by the other diagonal of the quadrilateral is called a \emph{flip}.
The result of a flip is again a triangulation.
The \emph{flip graph} of a given polygon or point set is the graph whose vertex set is the set of all triangulations, and in which two triangulations share an edge iff one can be obtained from the other by a flip.
For both triangulations and polygons, it is known that one can transform each triangulation into any other~\cite{lawson_connected,lawson_delaunay} using $O(n^2)$ flips, i.e., the flip graph is connected with diameter $O(n^2)$.
For triangulations on $n$ vertices, its diameter can be $\Omega(n^2)$, both for polygons and point sets~\cite{hurtado_noy_urrutia}.
The \emph{flip distance} of two triangulations is the distance of the corresponding vertices in the flip graph.

Determining the flip distance between two triangulations is APX-hard for point sets~\cite{point_set_hard} (see also~\cite{lubiw_pathak}), and NP-hard for simple polygons~\cite{poly_hard}.
The problem variant for convex polygons remains a long-standing open problem.

In order to improve the understanding of flip graphs, Bose et al.~\cite{edge_labelled} considered the problem variant in which each edge of the triangulation gets a unique label, and when flipping an edge, the added edge gets the label of the removed edge.
In that setting, two labeled triangulations are equivalent if they have the same triangulation and also the labels match.
The flip graph of edge-labeled triangulations can be defined as before.
Note that it now may contain several vertices corresponding to edge-labeled triangulations with the same unlabeled triangulation.
However, the flip graph may no longer be connected~\cite{edge_labelled}.

Eppstein~\cite{eppstein} considered the flip distance problem of triangulations on point sets without convex 5-holes, i.e., without convex pentagons spanned by the points that do not contain a point of the set in their interior.
(It is known that such sets of at least 10 points need to contain collinear point triples~\cite{harborth}; see also~\cite{empty5gon}.)
As pointed out in~\cite{orbit_conjecture_socg}, Eppstein showed that no permutations of edge labels are possible via flips.
He obtains a polynomial-time algorithm to determine the flip distance of any two triangulations of such a point set.
This encourages the investigation of edge-labeled triangulations in relation to the flip distance problem.
Convex pentagons indeed play a special role for edge-labeled triangulations.

\begin{observation}[Bose et al.~\cite{edge_labelled}]
In a triangulation of a convex pentagon, the labels of its diagonals can be swapped by a sequence of five flips.
\end{observation}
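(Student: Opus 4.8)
The plan is to exploit the fact that a convex pentagon has an extremely simple flip graph and then to track the labels around it. First I would recall the two standard facts: a triangulation of a convex $k$-gon has exactly $k-3$ diagonals, so a triangulation of a convex pentagon $P=p_1p_2p_3p_4p_5$ has exactly two diagonals; and there are exactly $C_3=5$ triangulations of $P$, namely the five fans $T_i$ from the vertices $p_i$, where $T_i$ consists of the diagonals $p_ip_{i+2}$ and $p_ip_{i+3}$ (all indices modulo $5$). Because $P$ is convex, the two triangles incident to any diagonal form a convex quadrilateral, so every diagonal is flippable; hence each $T_i$ has exactly two flip-neighbors. A connected $2$-regular graph on five vertices is a single $5$-cycle, so the flip graph of $P$ is one cycle visiting $T_1,\dots,T_5$; in particular, performing the five flips that traverse this cycle once returns to the starting triangulation as an \emph{unlabeled} triangulation.

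The core of the argument is then a finite check of what these five flips do to the labels. Identifying the flips precisely --- flipping $p_ip_{i+2}$ in $T_i$ uses the quadrilateral $p_ip_{i+1}p_{i+2}p_{i+3}$ and yields $T_{i+3}$, while flipping $p_ip_{i+3}$ uses $p_ip_{i+2}p_{i+3}p_{i+4}$ and yields $T_{i+2}$ --- one starts at $T_1$ with $p_1p_3$ labeled $a$ and $p_1p_4$ labeled $b$, walks once around the cycle, and records which physical diagonal inherits each label after each flip. The outcome is that one of the two labels is moved by three of the five flips and the other by the remaining two, and upon returning to $T_1$ the diagonal $p_1p_3$ carries $b$ while $p_1p_4$ carries $a$; all non-diagonal edges are untouched. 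This is exactly the desired swap by five flips.

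I do not expect a genuine obstacle here: the only thing needing care is the bookkeeping in the second step, namely choosing at each step the flip that continues \emph{around} the $5$-cycle rather than immediately undoing the previous one, and correctly tracing label inheritance through each flip. Once the flip graph of the pentagon is identified as a $5$-cycle, the statement follows from this short case analysis, and no assumption on $P$ beyond convexity is used.
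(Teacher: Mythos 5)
Your proof is correct, and in fact the paper offers no proof at all for this statement: it is imported as an observation from Bose et al.\ \cite{edge_labelled}. Your argument---identifying the flip graph of the convex pentagon as the $5$-cycle through the five fan triangulations and walking around it once while tracking label inheritance (one label carried by three flips, the other by two, ending with the diagonals' labels exchanged)---is exactly the standard verification behind the cited observation, so there is nothing to add.
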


Since the flip graph of edge-labeled triangulations may not even be connected, a natural computational problem is the one of determining whether two triangulations are connected.
Lubiw, Mas\'arov\'a, and Wagner~\cite{orbit_conjecture_socg} provide a polynomial-time algorithm for that problem by identifying a fundamental property of flip graphs of edge-labeled triangulations, called the \emph{orbit theorem} (see below).
They ask whether the problem of determining the flip distance of two triangulations becomes tractable in the edge-labeled setting.
It turns out that the two reductions of~\cite{point_set_hard} and~\cite{poly_hard} can be modified to answer the question in the negative, which we present in this note.

\section{Preliminaries}
This note is based on modifying the constructions in~\cite{point_set_hard} and \cite{poly_hard}.
We borrow definitions from those papers, and will refer to them for technical details of the construction.
The underlying vertex set will mainly be the same as for those reductions, we merely modify one gadget, alter the source and target triangulation (eventually making them identical when disregarding the labeling) and add the edge labels appropriately.

We assume all point sets to be in general position, i.e., no three points are collinear.
An \emph{edge of a point set $S$} is a line segment spanned by two points of~$S$.
Two edges $e$ and $f$ of~$S$ are in the same \emph{orbit} if there exists a triangulation containing $e$ that can be flipped to a triangulation that contains $f$ such that $f$ has the same label as~$e$.
As it turns out, the orbits also determine whether one labeled triangulation can be transformed into another.

\begin{theorem}[Orbit theorem~\cite{orbit_conjecture_socg}]
Given two edge-labeled triangulations of a point set, there exists a flip sequence that transforms one into the other if and only if all edges sharing a label belong to the same orbit.
\end{theorem}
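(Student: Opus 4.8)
The plan is to prove the two directions separately; the forward (``only if'') direction is essentially bookkeeping, while the reverse direction is the substantive claim (the former ``Orbit Conjecture''). For ``only if'', suppose a flip sequence $T_1 = S^{(0)}, S^{(1)}, \dots, S^{(m)} = T_2$ transforms the first edge-labeled triangulation into the second, and fix a label $\ell$. At each step the edge carrying $\ell$ either is unchanged or is replaced by the other diagonal of a flipped convex quadrilateral, which moves $\ell$ to an edge in the same orbit — by the very definition of ``orbit'', provided one first makes explicit that ``same orbit'' means equivalence under the symmetric--transitive closure of the stated relation. Composing these single steps shows that the edge carrying $\ell$ in $T_1$ and the edge carrying $\ell$ in $T_2$ lie in one orbit, and since this holds for every label, all edges sharing a label belong to the same orbit.

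For the ``if'' direction I would first reduce to the case of a common underlying triangulation. Using connectivity of the \emph{unlabeled} flip graph, flip the underlying triangulation of the first labeled triangulation to that of the second, carrying labels along; since each label only ever moves within its own orbit, the hypothesis that all edges sharing a label lie in one orbit is preserved. Hence it suffices to treat two edge-labeled triangulations with the same underlying triangulation $T$, related by a permutation $\pi$ of $E(T)$ that sends every edge to an edge in the same orbit, i.e.\ that preserves the partition of $E(T)$ into orbit-classes $O_i \cap E(T)$.

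Next I would exploit a group structure: the permutations of $E(T)$ realizable by closed flip walks based at $T$ form a subgroup $G_T$ of $\mathrm{Sym}(E(T))$ (concatenating walks multiplies permutations, reversing a walk inverts it), and by the ``only if'' direction $G_T$ is contained in the subgroup $H_T$ of permutations preserving the orbit-class partition. Since $H_T$ is generated by transpositions of pairs of edges of $T$ lying in a common orbit, it remains to show that each such transposition lies in $G_T$: this yields $G_T = H_T \ni \pi$ and finishes the proof.

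The hard part is exactly this last step: given $e,f \in E(T)$ in the same orbit, produce a flip sequence from $T$ back to $T$ whose net effect on labels is the transposition $(e\,f)$ and nothing else. A witness that $e$ and $f$ share an orbit is a flip walk carrying a label from $e$ to $f$; the naive ``there-and-back'' closure of such a walk returns to $T$ but realizes the identity, so one must insert a \emph{local twist} — a five-flip move on a convex pentagon, which by the Observation transposes the labels of its two diagonals — at the point where the walk reverses, so that the round trip realizes a single transposition. Arranging this when $e$ and $f$ are not already the diagonals of a common empty pentagon is the crux, and I expect it to be the main obstacle. I would attack it by induction on the number of points: flip to bring some vertex to degree three, delete it to pass to a triangulation of a subconfiguration in which the images of $e$ and $f$ are still in one orbit, invoke the inductive hypothesis there, and lift the resulting closed walk back, with the convex pentagon as the irreducible base case. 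Checking that orbit-membership and the transposition structure are compatible with this vertex-deletion/insertion, and that the base cases truly exhaust all irreducible situations, is where essentially all of the difficulty lies.
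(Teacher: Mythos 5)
There is no ``paper's own proof'' to compare against here: this note quotes the Orbit Theorem from Lubiw, Mas\'arov\'a, and Wagner~\cite{orbit_conjecture_socg} and uses it as a black box, so your proposal has to be judged against that original proof. Your ``only if'' direction and the two reductions (pass to a common underlying triangulation via connectivity of the unlabeled flip graph; observe that the closed-walk permutations form a group, so it suffices to realize every transposition of two same-orbit edges of $T$) are fine, and in spirit they match the actual strategy, where the five-flip pentagon swap of the Observation is the elementary generator. But you have not proved the theorem: you explicitly leave the crux --- realizing the transposition $(e\,f)$ for arbitrary same-orbit edges $e,f\in E(T)$ --- as an open obstacle with only a speculative sketch, and that crux is essentially the whole content of the former Orbit Conjecture.

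Two concrete problems with the sketch. First, the ``there-and-back with a local twist'' (conjugating a pentagon swap by a witness walk that carries the label of $e$ onto a pentagon diagonal) does produce a transposition, but of $e$ with whatever edge of $T$ the walk happens to carry onto the \emph{other} diagonal --- there is no control that this partner is $f$, nor even that the resulting set of realizable transpositions connects each orbit class; making this work is exactly where the difficulty sits. Second, the proposed induction by flipping a vertex to degree three and deleting it does not obviously preserve the hypothesis: orbits are defined via empty convex quadrilaterals (and label swaps need empty pentagons), and deleting a point can destroy the witnessing flip walk, so $e$ and $f$ may no longer be in the same orbit in the smaller configuration; conversely, lifting a closed walk back after reinserting the point is not automatic. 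The actual proof in~\cite{orbit_conjecture_socg} replaces this missing step by a genuinely different tool: a topological/shelling argument showing that closed walks in the flip graph are generated by elementary $3$-, $4$-, and $5$-cycles, of which only the pentagon cycles act nontrivially on labels, and from this the orbit-preserving group is assembled. Without an argument of that strength (or a correct substitute for your deletion induction), your proposal is a plausible plan but not a proof.
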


While the orbit theorem is proved for point sets, its proof also directly generalizes to point sets with fixed edges, and thus to triangulations of polygons.%
\footnote{Personal communication with Zuzana Mas\'arov\'a, 2018.}

We make use of the quadrilateral graph, defined by Eppstein~\cite{eppstein}.
The \emph{quadrilateral graph} of a point set $S$ is the graph whose vertices consist of the $\binom{n}{2}$ segments spanned by two points of $S$ (i.e., \emph{edges} of the point set), and there is an edge between two segments if and only if they cross and the convex quadrilateral formed by their endpoints is empty of other points of~$S$.
The quadrilateral graph can be defined analogously for polygonal domains.
Each flippable edge in a triangulation defines a quadrilateral.
Clearly, for two edges to be in the same orbit, it is necessary that they are connected in the quadrilateral graph.
By keeping in mind that the flip graph is connected even with some edges fixed, it can also be seen that this condition is sufficient.

Our reductions will mainly make use of the following fact.
For one label to be transferred from one edge to another, the sequence of triangulations must define a path in the quadrilateral graph that connects the two edges.
Indeed, any flip sequence defines, for a given edge, a walk in the quadrilateral graph in a natural way.

A main ingredient for both reductions is the so-called \emph{double chain}, used by Hurtado, Noy, and Urrutia~\cite{hurtado_noy_urrutia} to show the quadratic lower bound on the flip graph diameter.
We will use double chains for point sets as well as for simple polygons (actually, a small double chain will be sufficient).
We thus follow the formulation of~\cite{pilz_thesis}, in which an attempt for a sufficiently general definition is made.

\begin{definition}
A \emph{double chain} $D$ is a point set of $2n$ points, $n$ on the upper chain and $n$ on the lower chain.
Let these points be $(u_1,\dots,u_n)$ and $(l_1,\dots,l_n)$, respectively, ordered from left to right.
Any point on one chain sees every point of $D$ on the convex hull boundary of the other chain (i.e., the interior of the straight line segment between these two points does not intersect the convex hulls of the two chains), and any quadrilateral formed by three points of one chain and one point of the other chain is non-convex.
See \fig{fig:chain}.
\end{definition}

Note that this definition requires the upper and the lower chain to have the same number of points.
It is also common to use generalizations where these number differ (and we will actually use one with two points on the upper and three points on the lower chain).

\begin{figure}
\centering
\includegraphics{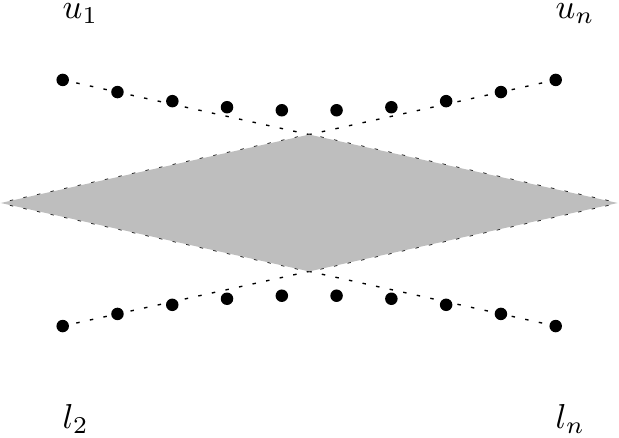}
\caption{A double chain.
The flip-kernel is shown gray.}
\label{fig:chain}
\end{figure}

\begin{definition}
Let $D$ be a double chain of $2n$ points, and consider the convex hulls of the upper and the lower chain.
The \emph{flip-kernel} of a double chain $D$ (shown gray in \fig{fig:chain}) is the maximal subset of the plane such that, for all $i,j$ with $1 \leq i,j \leq n$, and every point $p$ in the flip-kernel, the segments $p u_i$ and $p l_j$ are both interior-disjoint with the convex hulls of the upper and lower chain.
\end{definition}

Consider a triangulation of a double chain.
A crucial observation of Hurtado, Noy, and Urrutia~\cite{hurtado_noy_urrutia} for their lower bound proof is that a line separating the two chains passes through the edges from upper to lower points in a certain order, and flipping an edge $u_i l_j$ between the upper and the lower chain does not change that order.
We call such a line a \emph{separating line} of the double chain.
Under the aspect of labeled triangulations, this means that the order of the labels in which they appear along that separating line remains the same:
these edges do not have an empty convex pentagon in their orbit and thus cannot change their label with an edge of the same triangulation.

The previous reductions use this in the following way:
in order to change the triangulation of a double chain (of which multiple instances will be used in gadgets of the reduction), these edges need to be flipped to be incident to points of the set in the flip-kernel.
This process will force triangulations of certain types to appear in the flip sequence.

We call a triangulation of a double chain in which all points of the lower chain are incident to $u_1$ the \emph{upper extreme triangulation}, and the one in which all edges are incident to $l_1$ the \emph{lower extreme triangulation}.

Flipping from an upper extreme triangulation to a lower extreme triangulation requires at least $(n-1)^2$ flips~\cite{hurtado_noy_urrutia}.
If there is one additional point in the flip-kernel, this requires only $4n-4$ flips~\cite{problemas}.
Note that also by this way of flipping, the order of the labels in which a separating line crosses the edges between points of the chain does not change.

It is instructive to see how to change this order of labels for double chains with additional points.
Figure~\ref{fig:swapping_labels} shows how to change the labels of the lower extreme triangulation using an additional point in the flip-kernel of the double chain.
There, we see that it requires at most $9n-9$ flips to exchange the labels of the edges incident to $l_1$ with the labels of those incident to $u_n$, maintaining their relative order.
Apart from that, the order of the labels along a separating line stays the same.
If the point set would only consist of the double chain (and, in particular, no point in the flip-kernel), it would not be possible to change the labels in that way.

\begin{figure}
\centering
\includegraphics{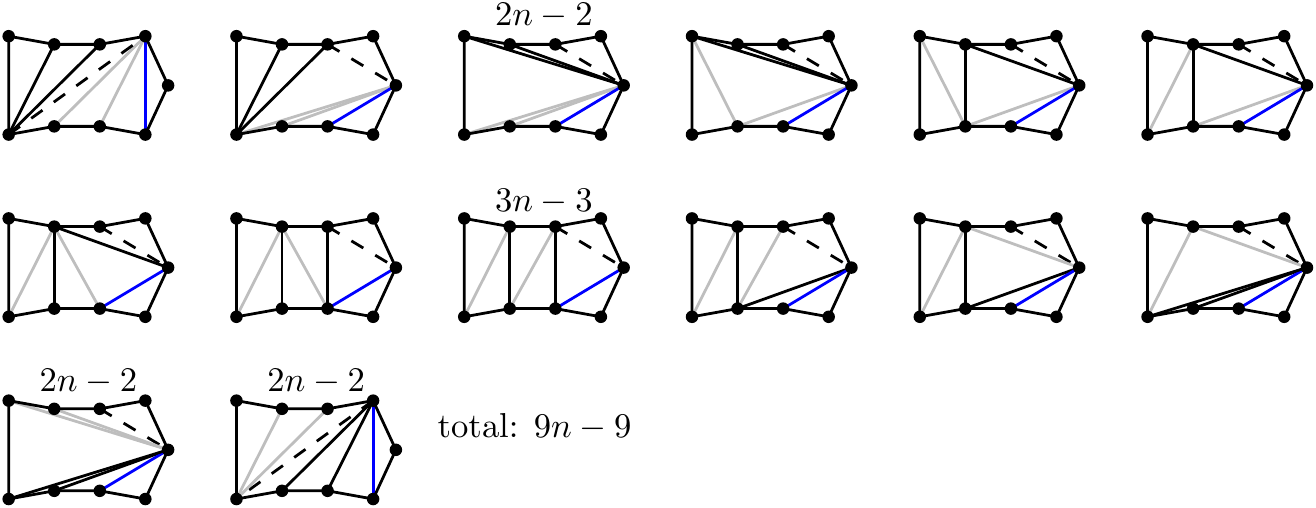}
\caption{Relabeling of a lower extreme triangulation of an extended double chain.
The colors indicate the classes of labels.
First, all edges are made incident to the point $p$ in the flip-kernel (requiring $2n-2$ flips).
Then, the edges of the two different label classes (gray and black) are arranged in a zig-zag fashion ($3n-3$ flips).
After that, the edges are again made incident to the point~$p$ ($2n-2$ flips).
Finally, using $2n-2$ flips, we flip the edges back to a lower extreme triangulation, which has the two groups of labels exchanged.
In total, this requires $9n-9$ flips.
}
\label{fig:swapping_labels}
\end{figure}

Both previous reductions rely on going from an upper to a lower extreme triangulation without using a quadratic number of flips in any double chain gadget.
One way of re-using the reduction for the stronger setting in which the source and the target have the same (unlabeled) triangulation, but have different labels, would be to change the labels of the double chain gadgets.
However, it is hard to argue in a local way that the labels ``stay within the gadgets''.
We will therefore use a modification that actually is less involved:
it is sufficient to use a construction similar to a double chain, that uses only five vertices.

\section{Point Sets}
We modify the reduction for the unlabeled variant described in~\cite{point_set_hard}.
The reduction for point sets is from the \textsc{Minimum Vertex Cover} problem, which is known to be APX-complete~\cite{vertex_cover_apx}.
We create a triangulation $T$ on a point set $S$, and argue that it is APX-hard to find the shortest sequence of flips that produces a given permutation of the edge labels.

Let $G=(V,E)$ be a graph of a \textsc{Minimum Vertex Cover} instance.
The vertices $V$ are placed as points in convex position, connected by straight-line edges representing the edges of $E$, in a way that no three edges intersect in a common point of their relative interior.%
\footnote{The appendix of~\cite{point_set_hard} contains a detailed account on accurate embedding of the points with coordinates polynomial in the size of the initial problem instance.
We will detail the precise point placement only when necessary, and otherwise refer only to the techniques used there.}
An embedding of such a graph is shown in \fig{fig_k33}.
We now add our gadgets, which consist of point sets together with edges of the triangulation.

\begin{figure}
\centering
\includegraphics{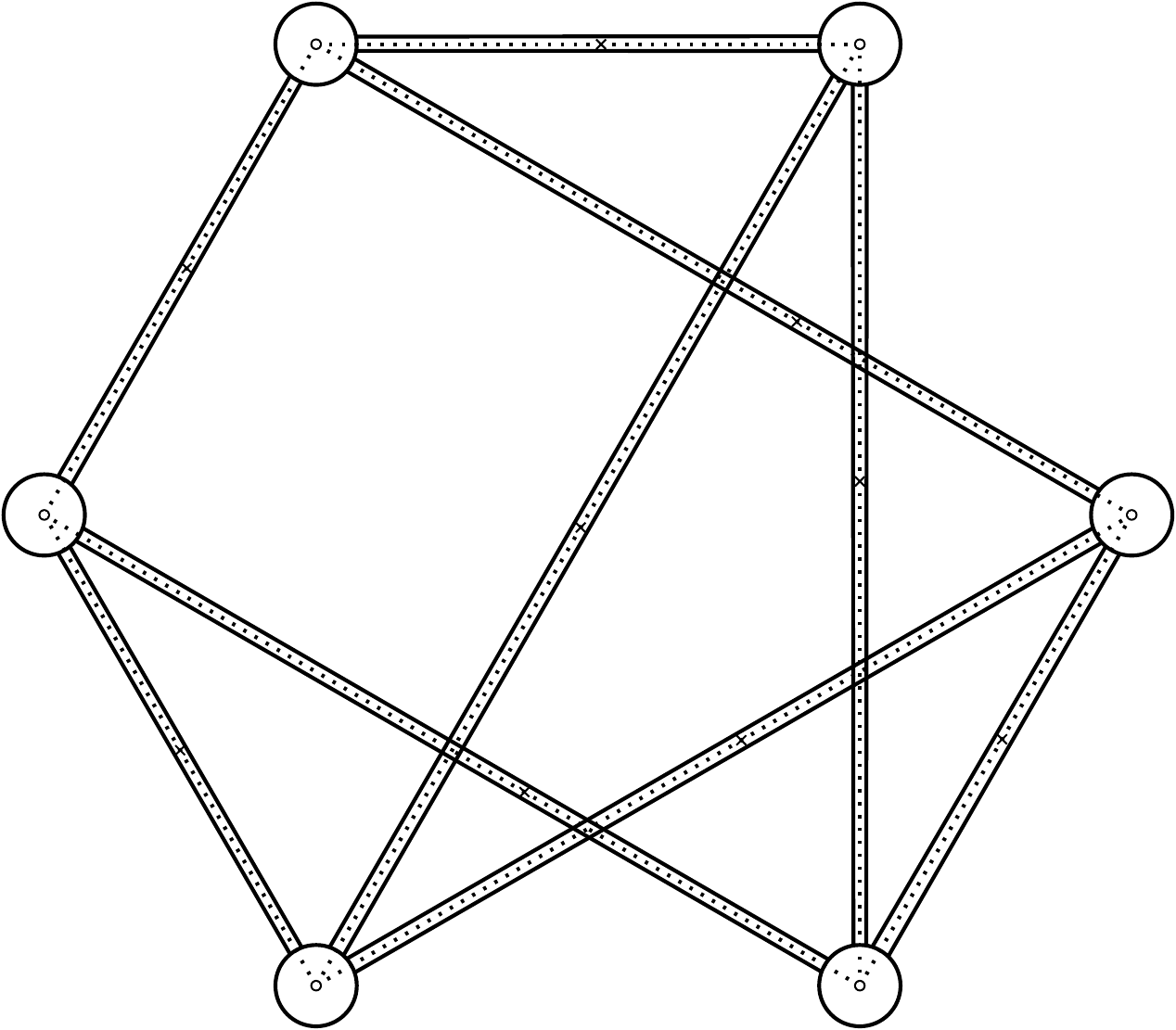}
\caption{A graph from a \textsc{Minimum Vertex Cover} instance drawn for the reduction.
The initial vertices are in convex position, and we place additional points for wiring gadgets in a small circle around it.
Edges are replaced by tunnels in which we will place points for two crossing edges and also the edge center gadget, which are the only parts that contain edges with different labels.
Figure from~\cite{point_set_hard}.
}
\label{fig_k33}
\end{figure}

The vertices of $G$ are replaced by \emph{wiring} gadgets.
These are gadgets that contain the initial point $v$ representing the vertex, called the wiring's \emph{defining vertex}.
The remaining points are placed on a small circle with center $v$ in the following way.
We place two chains $L$ and $R$ of $w-1$ points each on the circle (the value of $w$ is large but polynomial in $n$) such that any line defined by one point of $L$ and one point of $R$ separates $v$ from the remaining construction.
In the triangulation, there is a zig-zag path containing all these points and alternating between $L$ and $R$.
See \fig{fig_wiring}.
Intuitively, the edges of a wiring being flipped as shown in \fig{fig_wiring}~(right) in some triangulation of the sequence corresponds to it being chosen for the vertex cover (this will be detailed later).

\begin{figure}
\centering
\includegraphics[width=1\textwidth]{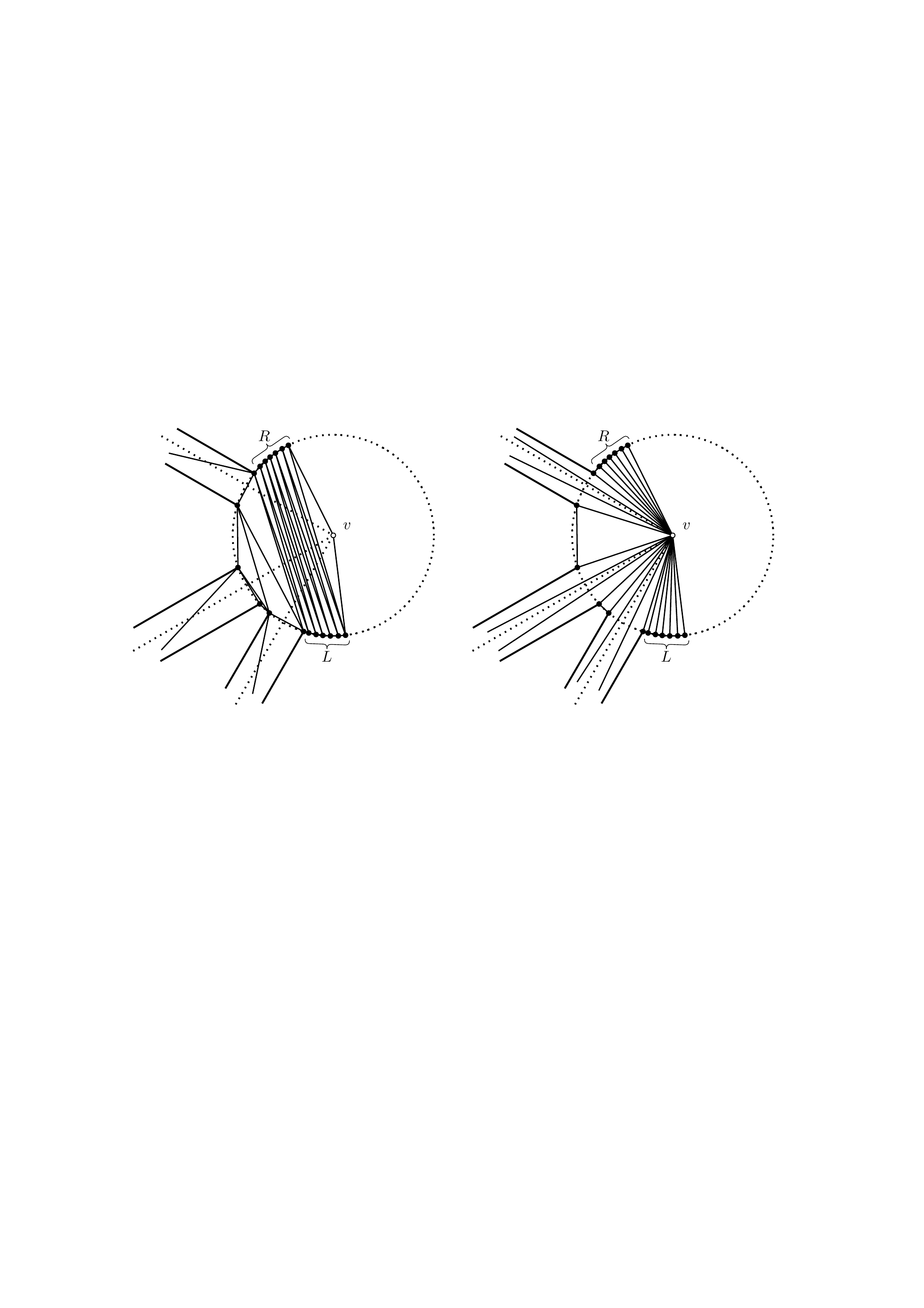}
\caption{A wiring gadget, placed at each vertex of the initial graph.
For the labels in the edge centers to be exchanged, they need to ``see'' the vertex in the middle of the wiring.
To this end, we need $2w-1$ flips.
Figure adapted from~\cite{point_set_hard}
}
\label{fig_wiring}
\end{figure}

The edges of $G$ are transformed using two types of gadgets, the \emph{crossings} and the \emph{edge centers}.
Consider an edge $vw$.
In the original reduction, the edge center consists of a double chain with exactly the points $v$ and $w$ in their flip-kernel.
In our variant of the reduction, we only have to place five points;
on the one hand, fewer points are needed, as we no longer have to rely on the number of flips to change the triangulation of a double chain to enforce certain sequences of flips, on the other hand these points need to be carefully placed to lead to the desired behavior.

For each edge $e = vw$ we arbitrarily choose a point $c_e$ that is not on a crossing (this point will not be part of the point set $S$).
Let $\vec{t}$ be a sufficiently short vector perpendicular to $e$, and translate two copies of $e$ by $\vec{t}$ and $-\vec{t}$.
We obtain two non-intersecting circular arcs $A_e$ and $A'_e$ by bending these copies slightly towards the midpoint of~$e$.
(Intuitively, we obtain a region shaped like a biconcave lens that contains the edge~$e$.)

For each crossing of two edges $e$ and $f$ of $G$, we place a \emph{crossing} gadget.
It consists of four points that are at the crossings of the arcs $A_e, A'_e, A_f, A'_f$ (in an actual embedding of~$S$ with rational coordinates, these points may only be sufficiently close to these crossings, see~\cite{point_set_hard}).
We add the edges of a triangulation of these four points to $T$.

The \emph{edge center} gadget is different to the one in~\cite{point_set_hard}.
Consider an edge~$e$ of $G$.
In the vicinity of the point $c_e$ between the arcs $A_e$ and $A'_e$, we place two points $t_1$ and $t_2$ such that the line $t_1 t_2$ is parallel to $e$.
In the same way, we place two points $b_1$ and $b_2$ on the other side of $e$, but move each of them slightly away from $c_e$.
We then place a fifth point $c$ on the midpoint of the edge $b_1 b_2$ and move it slightly towards the edge $e$ such that it is in the interior of the convex hull of $\{t_1, t_2, b_1, b_2\}$, but ``arbitrarily close'' to the edge $b_1 b_2$ (i.e., moving the point closer to the edge $b_1 b_2$ would not change the set of triangulations on the point set).
See \fig{fig:point_set_clause}.
The edges $ct_1$ and $ct_2$ are part of the triangulation~$T$, with their labels swapped in the source and target triangulation.
All other edges of the triangulation do not change their label.

\begin{figure}
\centering
\includegraphics{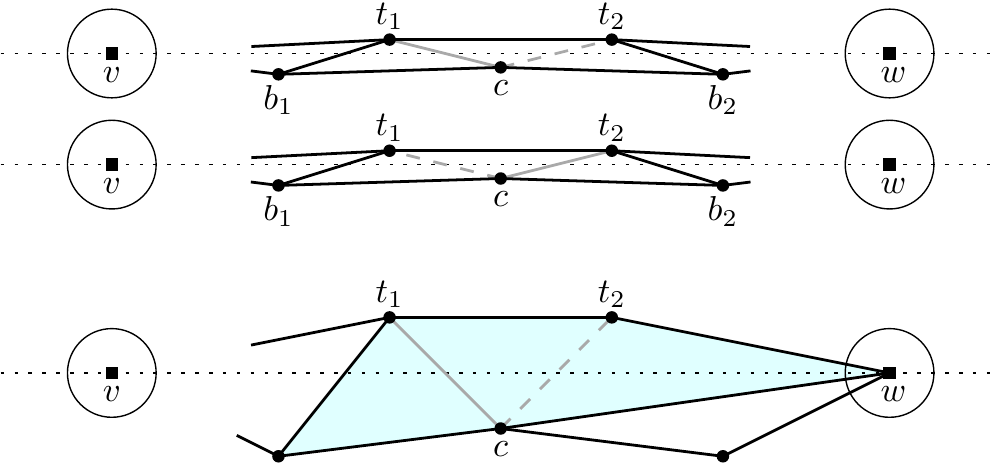}
\caption{The flat double chain of a edge center gadget.
The gray edges can only be flipped inside the flat double chain or when being made incident to one of the two central vertices $v$ or $w$ of the corresponding variable gadgets:
The component in the quadrilateral graph that contains that edge can be disconnected by removing the quadrilaterals (i.e., edges in the quadrilateral graph) containing the central vertices of the variable gadget;
therefore, the wiring of at least one corresponding variable gadget needs to be flipped.
A vertically scaled figure is shown at the bottom, with the blue region indicating a convex pentagon.
The vertices $v$ and $w$ should be considered far away, and the five vertices are sufficiently close to the edge $vw$ such that any further quadrilateral spanned with $cb_1$ or $cb_2$ contains $t_1$ and $t_2$, respectively, and any further quadrilateral spanned with $t_1t_2$ contains~$c$.
In general, there will be vertices of the crossing gadgets between the edge center and the wiring gadget.
}
\label{fig:point_set_clause}
\end{figure}

\begin{figure}
\centering
\includegraphics[page=2]{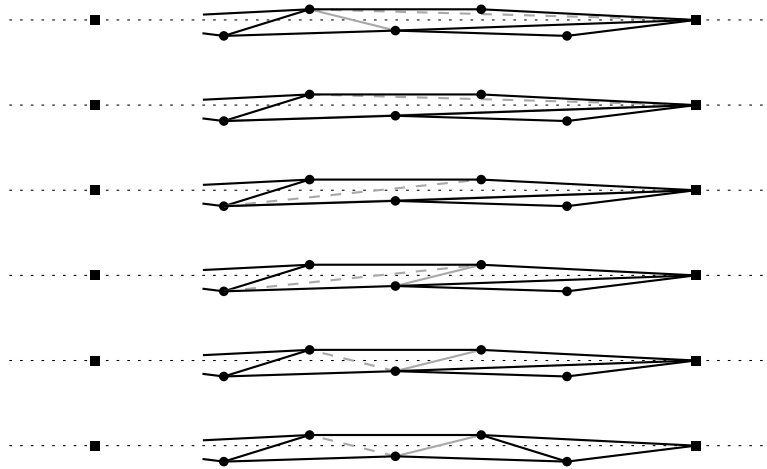}
\caption{Changing the labels of the edges $ct_1$ and $ct_2$ requires only a constant number of flips if one of $v$ or $w$ is ``visible''.
}
\label{fig:point_set_clause_2}
\end{figure}

\begin{lemma}
The edge center gadget of each edge $vw$ of $G$ can be placed in such a way that any flip sequence that swaps the labels of $ct_1$ and $ct_2$ contains a triangulation where one of the involved wiring gadgets has all edges from $L$ and $R$ incident to its defining vertex (in the way shown in \fig{fig_wiring},~right).
\end{lemma}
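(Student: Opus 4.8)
The plan is to follow, along any flip sequence that swaps the labels of $ct_1$ and $ct_2$, the walk in the quadrilateral graph of $S$ traced by the label that starts on $ct_1$. Since this label ends on $ct_2$, the walk connects $ct_1$ to $ct_2$, and the goal is to show that the edge center gadget can be placed so that every such walk must use a quadrilateral having $v$ or $w$ among its four vertices, and that a flip of such a quadrilateral can occur only in a triangulation in which the wiring of the corresponding vertex is in the configuration of \fig{fig_wiring}, right.

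First I would analyze the quadrilateral graph induced on the five points $t_1,t_2,b_1,b_2,c$. Because $c$ lies in the interior of the convex quadrilateral $t_1t_2b_2b_1$, the only empty convex quadrilaterals on these five points are $\{t_1,t_2,b_1,c\}$ and $\{t_1,t_2,b_2,c\}$, the trapezoid $t_1t_2b_2b_1$ itself containing $c$; their diagonal pairs are $\{ct_1,t_2b_1\}$ and $\{ct_2,t_1b_2\}$. Hence, on these five points, the quadrilateral graph has exactly the two edges $ct_1$--$t_2b_1$ and $ct_2$--$t_1b_2$, so $ct_1$ and $ct_2$ lie in distinct components. Next I would use the closeness conditions recorded in the caption of \fig{fig:point_set_clause}, together with the fact that the biconcave lens $A_e\cup A'_e$ shields the gadget from everything in $S$ except the crossing gadgets on the $vw$-tunnel and the endpoints $v,w$, to argue that every empty convex quadrilateral of $S$ that has a vertex outside the five gadget points and one of $ct_1,t_2b_1,ct_2,t_1b_2$ as a diagonal has $v$ or $w$ among its vertices. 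Combining the two statements: in the quadrilateral graph of $S$, deleting every quadrilateral that has $v$ or $w$ as a vertex leaves $ct_1$ in a component contained in the five gadget points, which therefore does not contain $ct_2$.

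It then follows that the tracked walk uses a quadrilateral $Q$ with, say, $v$ among its vertices. The corresponding step of the walk is a flip of $Q$ in some triangulation $T^\star$ of the sequence, and in $T^\star$ the vertex $v$ is joined by triangulation edges to at least two of the other three vertices of $Q$; since one diagonal of $Q$ has both endpoints among $t_1,t_2,b_1,b_2,c$ (and so $v$ is not one of them), $v$ is joined to at least one of those two endpoints, i.e., to a gadget point, which lies outside the small circle of $v$'s wiring. To finish I would invoke the separating property built into the wiring: every zig-zag edge $l_ir_i$ with $l_i\in L$, $r_i\in R$ lies on a line separating $v$ from the rest of $S$, so while any zig-zag edge is present it blocks $v$ from every point beyond its wiring circle. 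Hence $T^\star$ contains no zig-zag edge, which leaves every point of $L\cup R$ incident to $v$ — the configuration of \fig{fig_wiring}, right. The case in which the walk uses a quadrilateral containing $w$ instead is symmetric.

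The step I expect to be the main obstacle is the geometric bookkeeping in the second paragraph: choosing the coordinates of the five points, consistently with the arcs $A_e,A'_e$ and the crossing gadgets inherited from~\cite{point_set_hard}, so that in the quadrilateral graph the gadget is genuinely ``pinched'' at $v$ and $w$. Concretely, one must rule out long, thin empty convex quadrilaterals reaching from $ct_1$ or $t_2b_1$ across the central region to a crossing gadget on the tunnel without meeting $v$ or $w$; this is the kind of case analysis that~\cite{point_set_hard} handles through its explicit polynomial-coordinate embedding, and it is exactly what the closeness conditions in the caption of \fig{fig:point_set_clause} are meant to enforce. A smaller but genuine point is making precise that joining $v$ to a vertex outside its wiring circle forces the full fan at $v$ rather than a partial flip of the zig-zag; this rests on the separating-line property of the wiring from~\cite{point_set_hard}.
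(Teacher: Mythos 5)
Your proposal is correct and follows essentially the same route as the paper: you argue via the quadrilateral graph that, with the five gadget points placed close enough to $e$, the edges $ct_1$ and $ct_2$ (together with $t_2b_1$, $t_1b_2$) can only be connected through quadrilaterals having $v$ or $w$ as a vertex, so some triangulation in the sequence must contain an edge from $v$ or $w$ to a gadget point, which by the separating property of the wiring forces the fan configuration of \fig{fig_wiring} (right). The paper phrases this by exploring the neighborhood of $ct_2$ and of $t_1b_2$ rather than tracking the label's walk from $ct_1$, and is equally terse on the final ``all $2w-2$ wiring edges must be flipped'' step, so the two arguments match in substance and level of detail.
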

\begin{proof}
We explore the neighborhood of the edge $ct_2$ in the quadrilateral graph of $S$.
The edge center can be made small enough such that the only edges of $S$ that cross $ct_2$ are $t_1 b_2$, $b_1 w$, $b_2 v$, and $vw$.
We claim that, apart from $ct_2$, the edge $t_1 b_2$ has only neighbors in the quadrilateral graph that involve a point of the edge center and one of $v$ and~$w$.
This can be assured by placing the points of the gadget close enough to $e$; any quadrilateral spanned by $t_1$, $b_2$ and two other points of $S \setminus \{v,w\}$ must contain at least one of $c$ and $t_2$ and therefore is not an edge in the quadrilateral graph.
Hence, in order to swap the labels of $ct_1$ and $ct_2$, there has to be an edge between one of $v$ and $w$ and a point of the edge center gadget.
This, in turn, means that each of the $2w-2$ edges of the corresponding wiring need to be flipped.
Indeed, when flipping these edges to be incident to, say, $w$, the edges $ct_1$ and $ct_2$ become the diagonals of an empty convex pentagon $(w, t_2, t_1, b_1, c)$, and their labels can be swapped (see \fig{fig:point_set_clause_2}).
\end{proof}

Following~\cite{point_set_hard}, we choose $w$ large enough such that the number of all other flips that are required to change the labeling in each edge center are less than $\alpha w$, for some constant $0 < \alpha \ll 1$.

For the shortest flip sequence, an algorithm implicitly needs to determine the minimal number of wirings to flip to their center, and this corresponds to a minimum vertex cover of the original graph~$G$.
By dividing the length of a short flip sequence by $4w-2$ and rounding down, we thus get the size of a small vertex cover.
The computations in~\cite{point_set_hard} show that an approximation for the vertex cover can be obtained from an according approximation of the flip sequence, and thus show APX-hardness.

\begin{theorem}
Given an edge-labeled triangulation of a point set together with a permutation of its labels, it is APX-hard to find the shortest sequence of flips to obtain the given permutation of the labels.
\end{theorem}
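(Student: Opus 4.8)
The plan is to verify that the construction of the previous section is an approximation-preserving reduction --- an L-reduction --- from \textsc{Minimum Vertex Cover} restricted to graphs of bounded degree, which is APX-complete~\cite{vertex_cover_apx}. Write $G=(V,E)$ for the input graph and $\tau(G)$ for the size of a minimum vertex cover, and let $(S,T)$ together with the target permutation $\pi$ --- the one that swaps the labels of $ct_1$ and $ct_2$ in every edge-center gadget and fixes all other labels --- be the instance built from $G$. Following~\cite{point_set_hard}, the first step is to fix $w$ as a sufficiently large polynomial in $n$: large enough that the preceding lemma applies and that, for every edge $e$ of $G$, the number of flips needed to perform the pentagon swap of $ct_1,ct_2$ and to reorganize the crossing gadgets along the tunnel of $e$ is at most $\alpha w$, for a prescribed small constant $0<\alpha\ll 1$. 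The goal is then to prove
\[
  (4w-2)\,\tau(G) \;\le\; \mathrm{opt}(S,T,\pi) \;\le\; (1+O(\alpha))\,(4w-2)\,\tau(G),
\]
using that $\tau(G)=\Omega(|E|)$ for bounded-degree graphs; from such bounds any flip sequence of length within a factor $1+\varepsilon$ of the optimum decodes into a vertex cover of size within a factor $1+O(\varepsilon)+O(\alpha)$ of $\tau(G)$, which is what an L-reduction needs.

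For the upper bound, given a vertex cover $C$ of $G$ I would build a flip sequence as follows. For each $v\in C$, flip the zig-zag of the wiring of $v$ into the configuration in which every edge of $L\cup R$ is incident to $v$; this costs $2w-1$ flips and, since these flips only involve points of that wiring together with $v$, it does not interact with any other wiring. Then, for each edge $e=vw$, pick an endpoint lying in $C$ --- which is now ``visible'' to the edge center of $e$ --- and swap the labels of $ct_1$ and $ct_2$ through the empty convex pentagon furnished by the lemma, using $O(\alpha w)$ flips in total for this edge. Finally, reverse all the wiring flips, another $2w-1$ per vertex of $C$. The outcome is $T$ with its labels permuted by $\pi$, and the sequence has length at most $(4w-2)|C| + \alpha w|E|$; taking $C$ of minimum size and using $\tau(G)=\Omega(|E|)$ gives the upper bound above.

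For the lower bound, take any flip sequence from $T$ to $\pi(T)$. For each edge $e=vw$ the lemma produces a triangulation in the sequence in which the wiring of $v$ or of $w$ has all its edges incident to its defining vertex; let $C\subseteq V$ be the set of defining vertices for which this happens at some point of the sequence, so that $C$ is a vertex cover of $G$. For each $u\in C$, passing from the initial $T$ to such a configuration requires at least $2w-1$ flips of edges of the wiring of $u$, and, because the sequence ends at the (unlabeled) triangulation $T$ in which this wiring is again the zig-zag path, at least another $2w-1$ flips of those edges to undo it; as a flip of a wiring edge of $u$ involves only points of that wiring together with $u$, the $4w-2$ flips charged this way to distinct vertices of $C$ are pairwise disjoint. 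Hence the sequence has length at least $(4w-2)|C|\ge(4w-2)\,\tau(G)$, and together with the upper bound this yields the displayed estimate; the precise arithmetic turning it into an L-reduction is identical to that carried out in~\cite{point_set_hard}.

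The crux --- and the only genuinely new ingredient beyond~\cite{point_set_hard} --- is the lower bound, and within it the property already isolated in the lemma: that the label on $ct_2$ cannot leave the edge-center gadget unless an entire wiring is first flipped to its defining vertex. Here the five-point ``flat double chain'' takes over the role that the quadratic flip distance of an ordinary double chain played in the unlabeled reduction, and it is this gadget that lets the source and target triangulations be identical as unlabeled triangulations; once the lemma is in hand, the rest is the bookkeeping inherited from~\cite{point_set_hard}.
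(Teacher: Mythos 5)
Your proposal is correct and follows essentially the same route as the paper: the new five-point edge-center gadget and its lemma force a wiring to be flipped to its defining vertex (and back, since source and target coincide as unlabeled triangulations), so a shortest sequence has length $(4w-2)$ times a minimum vertex cover up to $O(\alpha w)$ terms, with the L-reduction arithmetic inherited from the unlabeled reduction. In fact your write-up spells out the upper- and lower-bound bookkeeping that the paper only sketches by reference to~\cite{point_set_hard}.
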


\section{Simple Polygons}
The reduction for the flip distance problem on simple polygons in~\cite{poly_hard} is from the strongly NP-complete \textsc{Rectilinear Steiner Arborescence} problem~\cite{shi_su}.
In this problem, we are given a set of points (called \emph{sinks}) with positive integer coordinates and an integer $K$, and determine whether there is a directed tree rooted at the origin, consisting of horizontal and vertical edges oriented from left to right, or from bottom to top that contains all sinks, having total length at most~$k$.
An instance of the problem is mapped to a polygon in which we choose a certain initial triangulation.
The triangulation of a polygon $Q$ formed by a large double chain with an additional extreme point $p$ inside the flip-kernel is mapped to a monotone path on the grid, starting at the origin.
Flips in the triangulation modify that path.
In particular, flipping an edge such that it becomes incident to $p$ shortens the path, whereas flipping an edge incident to $p$ extends it.
See \fig{fig_chain_path}.
The polygon $Q$ is augmented by \emph{sink gadgets}.
A sink $(i,j)$ of a \textsc{Rectilinear Steiner Arborescence} problem is mapped to a ``small'' double chain that is between the points points $l_j$ and $l_{j+1}$ of the lower chain of the large double chain of~$Q$, and has only $u_i$ in its flip-kernel.
See \fig{fig_installing_sites}.
Every triangulation of the whole polygon can be mapped to a \emph{local triangulation} of~$Q$: an edge with an endpoint on a chain $C$ of a sink gadget is mapped to an edge with that endpoint being the first vertex of~$C$, i.e., the vertex also present in $Q$.
The proof relies on the fact that for a short flip sequence, the points of such a small \emph{sink gadget} need to see the only point in their flip-kernel.
Via the mapping to the local triangulation of the large double chain, this corresponds to the path passing over a sink in the original problem, and the union of all paths gives a short arborescence.

\begin{figure}
\centering
\includegraphics{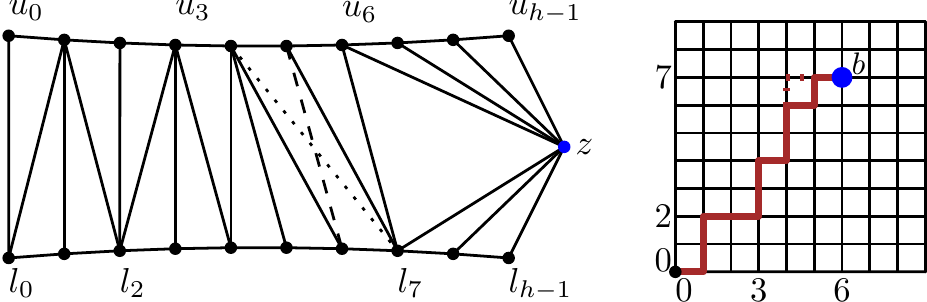}
\caption{The mapping of a triangulation of the polygon~$Q$ (consisting of a double chain with one additional point in the flip-kernel) to a path from the origin in the grid.
The paths correspond to parts of a rectilinear Steiner arborescence.
Having a certain triangle in corresponds to the path (and thus the Steiner arborescence) to visit a vertex.
Figure from~\cite{poly_hard}.
}
\label{fig_chain_path}
\end{figure}

\begin{figure}
\centering
\includegraphics[width=\textwidth]{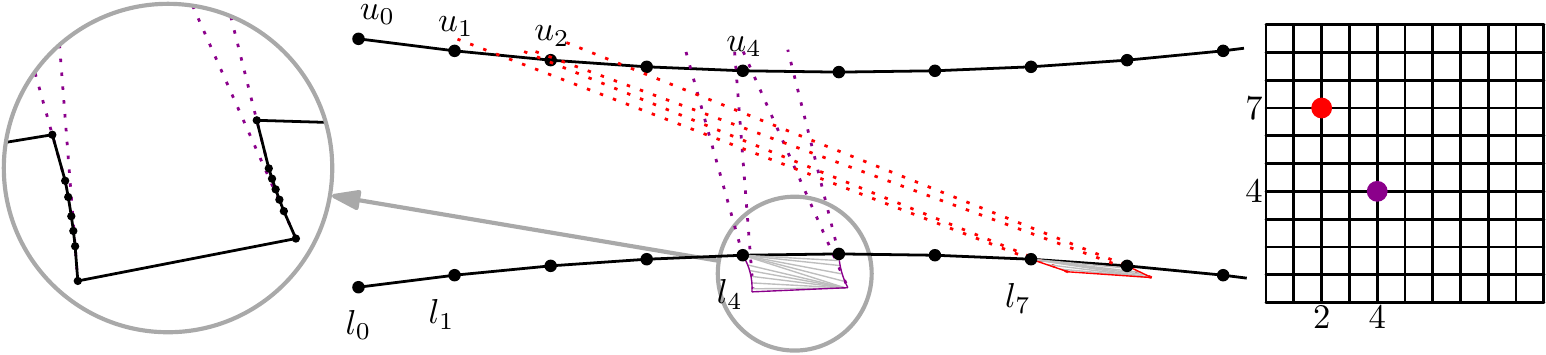}
\caption{Placement of small double chains as sink gadgets in the original reduction.
Figure from~\cite{poly_hard}.}
\label{fig_installing_sites}
\end{figure}

Our current reduction is almost the same.
However, the sink gadgets of the previous reduction are replaced by the ones shown in \fig{fig:new_sink}.
(They actually behave similar to the edge center gadgets in the reduction for point sets.)
For a sink gadget that is at the edge $l_i l_{i+1}$ and is ``aimed'' at vertex $u_j$ (i.e., representing a sink with coordinates $(i, j)$), we add three additional vertices $a$, $b$, and $c$.
The vertices $a$ and $b$ are placed such that they only see the vertex~$u_j$.
(This can always be achieved by placing $c$ close to the edge $a l_i$.)
We require that the labels of $ca$ and $c l_i$ are swapped in the target labeling.
The only quadrilaterals inside the polygon having the edge $ca$ as a diagonal are with the diagonals $bu_j$ and $b l_{i}$.
The only further quadrilateral inside the polygon having the edge $b l_{i}$ as a diagonal has the diagonal $a u_j$.
Hence, for changing the label of the edge $ac$, one of the edges $a u_j$ or $b u_j$ must be present in some triangulation of the flip sequence.

Consider now this triangulation and the mapping of it to a local triangulation of~$Q$.
An edge $va$ from any vertex $v$ to $a$ is mapped to the edge $vl_i$, and an edge $vb$ or $wc$ is mapped to an edge $vl_{i+1}$ or $wl_{i+1}$, respectively.
Observe that this mapping does not introduce crossings (a detailed analogous argument can be found in~\cite{poly_hard}), and therefore it results in a triangulation of~$Q$, the \emph{local triangulation}.
If an edge from $u_j$ to $a$ or $b$ is present, then this local triangulation contains the triangle $u_j l_i l_{i+1}$.
This means that the path corresponding to the local triangulation contains the sink~$(i, j)$, as required for the reduction.
The remaining argument is analogous to the one in~\cite{poly_hard}.

\begin{figure}
\centering
\includegraphics[page=2]{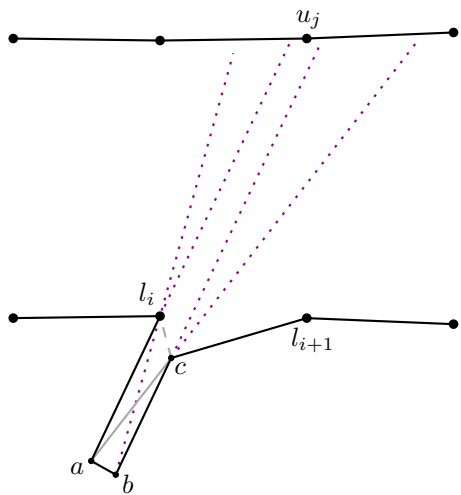}
\caption{A sink gadget for the labeled case, representing a sink $(i, j)$.
To swap the labels of the edges $ca$ and $cl_i$, they need to ``see'' the vertex $u_j$, i.e., there is a triangulation in the flip sequence that contains an edge from $u_j$ to a vertex of the gadget.
This results in a local triangulation of the polygon~$Q$ that contains the triangle~$u_j l_i l_{i+1}$.}
\label{fig:new_sink}
\end{figure}

\begin{theorem}
Given an integer $k$, an edge-labeled triangulation of a simple polygon and a permutation of its labels, it is NP-complete to decide whether there exists a sequence of at most $k$ flips to obtain this permutation of the labels.
\end{theorem}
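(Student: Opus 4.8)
The plan is to prove both membership in NP and NP-hardness. Membership is the easy half: a flip sequence of length at most~$k$ is a certificate, and it is of polynomial size in the instances produced by the reduction, so it can be checked in polynomial time (one simply verifies that each step is a legal flip and that the final edge-labeled triangulation is the prescribed one). For hardness I would reduce, exactly as in~\cite{poly_hard}, from the strongly NP-complete \textsc{Rectilinear Steiner Arborescence} problem~\cite{shi_su}, using the \emph{strong} completeness so that the reduction stays polynomial even though the sink coordinates are encoded in binary; the only change from~\cite{poly_hard} is that the small-double-chain sink gadgets are replaced by the three-vertex gadgets of \fig{fig:new_sink}, as set up in the discussion above.

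Given an RSA instance with sink set~$S$ and bound~$K$, I would build the polygon~$Q$ (the large double chain with a single point in its flip-kernel) as in~\cite{poly_hard}, take as the source triangulation~$T$ the one mapping to the trivial lattice path (the origin alone) together with a fixed canonical triangulation of every sink gadget, and define the target \emph{labeling} to agree with that of~$T$ except that inside each sink gadget representing~$(i,j)$ the labels of $ca$ and $cl_i$ are swapped. The budget would be $k := 2K + c\,|S|$, where $c$ is the gadget-independent constant number of flips that, once~$u_j$ has been made adjacent to a vertex of the gadget aimed at it, suffice to expose the empty convex pentagon, swap the two labels, and restore the gadget's triangulation. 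Two facts from the preceding discussion are the backbone of the argument: (i) swapping the labels of $ca$ and $cl_i$ forces some triangulation of the sequence to contain an edge from~$u_j$ to a gadget vertex, because in the quadrilateral graph the component of~$ca$ is cut by the quadrilaterals through~$u_j$; and (ii) any such triangulation maps, via the local-triangulation map, to a triangulation of~$Q$ containing the triangle $u_j l_i l_{i+1}$, i.e.\ to a lattice path that visits the sink~$(i,j)$.

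For the forward direction I would take an arborescence of length at most~$K$, run a depth-first traversal of it, extend the lattice path of~$Q$ along each tree edge when it is entered and retract it symmetrically when it is left (two flips per tree edge, using reversibility of flips), and, whenever the traversal reaches a sink, spend the~$c$ flips performing the label swap and restoring that gadget. Since the traversal retracts exactly what it extends, the $Q$-part and every sink gadget return to their initial state, only the intended $2|S|$ labels have moved, and the total is $2K + c|S| = k$. For the backward direction I would start from a flip sequence of length at most~$k$ realizing the target labeling and project it to the local triangulations of~$Q$; as in~\cite{poly_hard} the flips split into \emph{path flips}, which change the lattice path by one cell, and the remaining flips. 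By~(i)--(ii) the projected path visits every sink, and it begins and ends at the trivial path because source and target agree on~$Q$; by the lower-bound argument of~\cite{poly_hard}, a walk in this path space that visits a prescribed set of cells and returns to the origin uses at least twice the length~$K^\ast$ of a minimum rectilinear Steiner arborescence through those cells, so the sequence contains at least $2K^\ast$ path flips, and at least~$c$ of the remaining (non-path) flips are needed per gadget to realize each pentagon swap. As these two groups are disjoint, $k \ge 2K^\ast + c|S|$, and with $k = 2K + c|S|$ we get $K^\ast \le K$; hence the RSA instance is a yes-instance, completing the reduction.

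The step I expect to be the main obstacle is making the backward accounting airtight in this \emph{labeled, source-equals-target} setting: one must show that the algorithm cannot cheat by using path flips to also perform gadget work — equivalently, that the ``$c$ extra flips per gadget'' really are disjoint from, and additional to, the $2K^\ast$ path flips — and that the local-triangulation map remains crossing-free when applied to the new three-vertex gadgets, so that the projected path-flip count is a genuine lower bound. This is exactly where the design of \fig{fig:new_sink} does the work: $ca$ and $cl_i$ have so few neighbours in the quadrilateral graph that the only way to move their labels is through~$u_j$, forcing an honest visit to the sink; formalizing this amounts to reproducing, essentially verbatim but for the modified gadget, the non-crossing and lower-bound lemmas of~\cite{poly_hard}, and calibrating~$k$ so that the path-flip group can number at most~$2K$.
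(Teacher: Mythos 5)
Your proposal follows essentially the same route as the paper: reduce from \textsc{Rectilinear Steiner Arborescence} exactly as in~\cite{poly_hard}, replace the small-double-chain sink gadgets by the three-vertex gadgets of \fig{fig:new_sink} with the labels of $ca$ and $cl_i$ swapped, use the quadrilateral-graph argument to force an edge from $u_j$ into the gadget, and project via the local-triangulation map to conclude that the lattice path visits each sink, deferring the remaining counting to~\cite{poly_hard}. Your additional details (NP membership, the budget $k=2K+c|S|$, and the forward/backward accounting) are consistent with the argument the paper leaves implicit.
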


\section{Conclusion}
We showed that previous hardness reductions for the flip distance problem can be extended for the setting of edge-labeled triangulations.
In particular, the problems remain hard if the source and target triangulations are the same except for the labels of certain edges.

Since flipping from the upper extreme triangulation to the lower extreme triangulation using $4n-4$ flips gives a permutation of the edge labels that is independent of which point in the flip-kernel is chosen to flip the edges to, determining a short flip sequence between two triangulations is hard even if we are given a labeling that results from such a short sequence.

The modifications presented in this note merely allow for having the (unlabeled) source and target triangulations identical.

The complexity of the flip distance problem for convex polygons remains unsolved.


\bibliographystyle{abbrv} %
\bibliography{bibliography}

\end{document}